\documentclass[runningheads]{llncs}

% Macros and notation (kept minimal for LNCS)

\usepackage{amsmath,amssymb,mathtools}
\usepackage{stmaryrd}
\usepackage{xspace}
\usepackage{listings}
\usepackage{hyperref}

% --- Basic sets ---
\newcommand{\Tid}{\mathit{Tid}} % set of thread identifiers
\newcommand{\Var}{\mathit{Var}} % shared variables

% --- Temporal/epistemic connectives ---
\newcommand{\Prev}{{\mathop{\mathbf{Prev}}}} % yesterday / previous (past-time)
\newcommand{\Since}{{\mathbin{\mathbf{Since}}}} % since
\newcommand{\Always}{{\mathop{\mathbf{Always}}}} % historically (always in the past)
\newcommand{\Sometime}{{\mathop{\mathbf{Sometime}}}} % sometime in the past
\newcommand{\Knows}{{\mathop{\mathbf{K}}}} % knowledge

% --- Derived connectives ---
\newcommand{\Implies}{\rightarrow}
\newcommand{\Iff}{\leftrightarrow}
\newcommand{\True}{\top}

% --- Program-state atoms ---
\newcommand{\Active}{\mathsf{act}} % scheduling predicate: A \Active
\newcommand{\At}{\mathsf{at}} % control location atom: \At(A,\ell)

% --- Shorthands ---
\newcommand{\PrevA}[2]{\Prev_{#1}\,#2} % syntactic sugar; defined as abbreviation in text
\newcommand{\PreA}[2]{\mathsf{pre}_{#1}\,#2} % pre-state of the last #1-step (defined as abbreviation)

% --- Happens-before ---
\newcommand{\HB}{\mathbin{\prec}}

% --- Misc ---
 % dereference/load notation
\newcommand{\code}[1]{\texttt{#1}}

% Listings setup (LNCS-friendly)
\lstset{basicstyle=\ttfamily\small,columns=fullflexible,keepspaces=true,frame=single}

% Allow LaTeX extra flexibility when all else fails, to avoid overfull hboxes
% caused by non-breaking inline math (e.g. \Knows_A\mathsf{R}_A) in narrow LNCS columns.
\emergencystretch=3em

\begin{document}

\title{Compositional Verification of Concurrency Using Past-Time Temporal Epistemic Logic}
%\title{Past-Time Epistemic Reasoning for Concurrency}
\titlerunning{Past-Time Epistemic Reasoning for Concurrency}

\author{Hamed Nemati \and Mads Dam}

\institute{KTH Royal Institute of Technology, Stockholm, Sweden\\
\email{\{hnnemati, mfd\}@kth.se}}
% \institute{\vspace{-0.6cm}}

\maketitle

\begin{abstract}
Shared-memory concurrency is notoriously difficult to reason about because each thread executes under \emph{interference} from other threads.
At the same time, many correctness arguments for classical algorithms are fundamentally \emph{epistemic}: a thread enters a critical region only when, from its local view, it can rule out that another thread is concurrently in that region.
We make such arguments explicit by introducing a past-time temporal epistemic logic interpreted over interleaving executions with perfect recall over local histories.
Past-time operators support ``since''-style reasoning, while epistemic modalities capture what a given thread can conclude from its own observation history.
We give a semantics and a sound proof system, instantiate the logic to a simple shared-memory language with instrumented read/write observations, and illustrate the approach on Peterson's mutual exclusion algorithm: an epistemic condition established at the loop-exit step, combined with ownership-based stability obligations, yields global mutual exclusion.
We additionally show how the logic embeds rely--guarantee reasoning as an \emph{epistemic stability-lifting} principle, yielding a parallel-composition meta-theorem that matches the classical compatibility obligations.
\end{abstract}

\keywords{Concurrency \and Epistemic logic \and Temporal logic \and Rely--guarantee \and Program verification}

\section{Introduction}
\label{sec:intro}

Correctness of shared-memory concurrent programs is notoriously subtle~\cite{OwickiGries76,Jones83,OHearn07}.
Even for safety properties, a proof must account for \emph{interference}: while one thread executes, other threads can modify the shared store and invalidate reasoning that would be sound in a sequential setting.
The enduring difficulty is to obtain arguments that are simultaneously \emph{local}, so that they scale to realistic code, and \emph{robust}, so that they imply global correctness properties such as mutual exclusion~\cite{Peterson81}, linearizability~\cite{HerlihyWing90}, or freedom from data races.

A recurring pattern in textbook correctness arguments is \emph{epistemic}:
threads make decisions based on what they can or cannot infer from their own actions and observations.
In Peterson's mutual exclusion algorithm~\cite{Peterson81}, for example, each thread enters the critical section only after it has ruled out that the other thread will also enter.
In lock-free code, a failed \emph{compare-and-swap} provides evidence that some other thread wrote a value~\cite{HerlihyShavit08}.
Yet, mainstream assertion languages for concurrent program logics are extensional~\cite{OwickiGries76,Jones83,OHearn07,JungSSSTBD15}: they speak about the current global state, but do not directly capture what a particular thread \emph{knows} given its partial view.
As a result, proofs often encode epistemic reasoning indirectly, e.g., via delicate global invariants or auxiliary (ghost) state.

In this paper we advocate a lightweight assertion language based on \emph{past-time temporal epistemic logic}~\cite{Fagin95Book,DBLP:conf/ijcai/HalpernM85,HalpernMoses90,LichtensteinPZ85}.
The epistemic modality $\Knows_A \varphi$ states that $\varphi$ holds in all runs consistent with thread $A$'s observations; we use a perfect-recall semantics where observations are represented as a prefix of $A$'s local history.
Past-time temporal operators (\emph{previously} and \emph{since}) allow us to talk about ordering and persistence of facts~\cite{Pnueli77,LichtensteinPZ85}.
Crucially, we derive a \emph{local-time} operator $\Prev_A\varphi$ that refers to the most recent moment before the current one at which $A$ executed a step.
This enables thread-centric specifications and proofs that abstract away arbitrarily many steps by other threads.

We instantiate the logic\footnote{Partial mechanization and verification of the base logic in Isabelle/HOL is available at \url{https://github.com/FMSecure/pttel-theory}} to an interleaving semantics for shared-memory programs where each step is executed by a single thread and performs at most one shared-memory access (compound guards are desugared into read micro-steps).
Atomic propositions in the instantiation describe control-flow locations and derived access markers.
On top of this, we present a sound, sequent-style proof system combining classical reasoning, a standard S5 basis for knowledge~\cite{Fagin95Book}, and fixed-point style rules for past-time temporal connectives~\cite{LichtensteinPZ85}.
We illustrate the approach on Peterson's algorithm by proving an epistemic fact about the loop-exit step (expressed using our derived pre-state operator): whenever thread \(i\) enters the critical section, it knows that the loop guard was false in the pre-state of its last step.
Combined with stability/ownership obligations, this yields global mutual exclusion; as a corollary, a thread in the critical section knows that the other is not.

Beyond this argument, we show how the logic can serve as an assertion language for \emph{compositional} reasoning.
In particular, we develop an epistemic variant of rely--guarantee reasoning~\cite{Jones83,VafeiadisParkinson07,DoddsFPV09} in which rely/guarantee conditions are expressed as past-time temporal constraints on environment and component steps, and thread-local knowledge assertions are connected to global invariants via stability obligations.
To summarize, we make the following contributions:
\begin{itemize}
  \item We formalize an interleaving semantics for shared-memory programs that makes local histories explicit and supports perfect-recall epistemic reasoning.
  \item We define a past-time temporal epistemic logic with a derived \emph{local-time} operator $\Prev_A$ and give a sound proof system.
  \item We instantiate the logic to a simple shared-memory language with derived read/write access markers and demonstrate the expressiveness via an epistemic proof sketch of Peterson's mutual exclusion.
  \item We show how the logic can be embedded into a rely--guarantee style proof method, and we use the running Peterson example to illustrate the resulting compositional reasoning obligations.
\end{itemize}

\section{Motivation and Running Example}
\label{sec:overview}

\paragraph{Epistemic view of interference.}
A thread reasons from a partial view: it can only observe its local state and the values it reads from shared memory.
In our semantics, the modality \(\Knows_A\phi\) means that \(\phi\) holds at all points that are indistinguishable to thread $A$ given its entire observation history.

\paragraph{Last-step facts and stability.}
Many local proofs have the following shape.
Thread $A$ executes a step, establishes a fact \(\phi\) (typically by reading shared state), and then the environment executes some number of steps.
To use \(\phi\) later, $A$ needs an argument that the environment cannot have falsified it.
We express ``the most recent $A$-step'' via an abbreviation \(\PrevA{A}{\phi}\) (defined in \S\ref{sec:logic}) and use it to write
\[
  \mathsf{est}_A(\phi) \;\triangleq\; \PrevA{A}{(\Knows_A\phi)}.
\]
The key rely--guarantee style reasoning principle we develop is:
``if $A$ established $\phi$ at its last step and $\phi$ is stable under environmental interference since then, then $\phi$ holds now.''

\paragraph{Running example:}
We use Peterson's mutual exclusion algorithm for two threads, $0$ and $1$, as our running example.
It uses shared variables \code{flag[0]}, \code{flag[1]} $\in\{0,1\}$ and \code{victim}$\in\{0,1\}$:

\begin{lstlisting}[language=C,caption={Peterson's algorithm for thread i (with j = 1-i).},label={lst:peterson}]
flag[i] = 1;
victim  = i;
while (flag[j] == 1 && victim == i) { /* spin */ }
/* critical section */
flag[i] = 0;
\end{lstlisting}

Our primary safety goal is mutual exclusion:
$
  \Always\,\neg(\mathsf{InCS}_0 \wedge \mathsf{InCS}_1)
$, 
where \(\mathsf{InCS}_i\) abbreviates ``thread $i$ is at its critical-section label''.
We also use the example to illustrate epistemic properties, e.g., how a thread can justify that a guard it evaluated remains meaningful under permitted interference.

% \paragraph{Paper outline.}
% \S\ref{sec:model} introduces the program model and observation semantics.
% \S\ref{sec:logic} defines the temporal--epistemic language and its interpretation.
% \S\ref{sec:proofsystem} presents a sound proof system.
% \S\ref{sec:rg} develops the rely--guarantee style stability principle.
% \S\ref{sec:peterson} applies the approach to Peterson's algorithm.

\section{Program Model and Observations}
\label{sec:model}

We work with a simple interleaving model of shared-variable concurrency.
The intent is not to propose a new operational semantics, but to make precise the semantic objects over which temporal and epistemic formulas are interpreted.

\paragraph{States and runs.}
Fix a finite set of thread identifiers \(\Tid=\{1,\dots,n\}\).
A \emph{global state} \(s\) consists of:
\begin{enumerate}
  \item a shared store \(\sigma : \Var \to \mathit{Val}\) mapping shared variables to values, and
  \item for each thread \(A\in\Tid\), a local component \(\lambda_A\) containing a control location (program counter) and the values of its local variables.
\end{enumerate}
We write \(s.\sigma\) for the shared store and \(s.\lambda_A\) for thread \(A\)'s local component.

The operational semantics induces a labelled transition relation
\(s \xrightarrow{A} s'\), meaning that \(s'\) is obtained from \(s\) by executing one atomic step of thread \(A\).
We assume \emph{interleaving} (exactly one label per step) and consider infinite \emph{runs}
\(\pi = s_0 s_1 s_2 \cdots\)
where for each \(i\ge 0\) there exists a unique thread \(\mathsf{act}(i)\in\Tid\) with
\(s_i \xrightarrow{\mathsf{act}(i)} s_{i+1}\).
(Strictly, \(\mathsf{act}\) depends on the run; we leave the run implicit when it is clear from context.)
We further assume the standard \emph{locality} (thread-frame) property of shared-variable concurrency: if \(s \xrightarrow{A} s'\), then \(s'.\lambda_B = s.\lambda_B\) for every thread \(B\neq A\).
That is, a step of one thread leaves the local components of all other threads untouched (it may of course change the shared store and the acting thread's own local component).
This property is what makes a thread's observation insensitive to interleaved steps of others, and we rely on it below when relating histories across environment steps.

\paragraph{Scheduling predicate.}
We interpret \(A\ \Active\) at position \((\pi,i)\) as the fact that thread \(A\) performs the step from \(s_i\) to \(s_{i+1}\), i.e., \(\mathsf{act}(i)=A\).

\paragraph{Observations and perfect recall.}
A key modelling choice for epistemic logic is what each thread can observe.
We use the standard choice for shared-memory programs: thread \(A\) observes its local component.
Formally, the observation function is
\(\mathsf{obs}_A(s) \triangleq s.\lambda_A\).

Although \(\mathsf{obs}_A(s)\) exposes only \(A\)'s \emph{local} component, this is not a restriction.
In a standard small-step semantics, a read from shared memory updates a local register; hence the \emph{result} of the read becomes part of \(\lambda_A\), and therefore part of \(A\)'s observation history.
For example, in Peterson's algorithm (Listing~\ref{lst:peterson}), evaluating the loop guard reads \code{flag[j]} and \code{victim}, and those values are reflected in the control decision and (in an instrumented semantics) can be recorded in locals.
The logic therefore does not assume that a thread ``magically'' observes the whole shared store; it reasons from the same information a sequential proof would use: local state plus the values returned by reads.

\paragraph{Instrumentation of reads as observable locals.}
To make ``what was read'' explicit in the state, so that it can be mentioned in atomic predicates or used in epistemic postconditions, we assume a lightweight instrumentation of the small-step semantics.
Concretely, for each thread \(A\) and each shared variable \(x\) that \(A\) may read, we assume a distinguished local variable \(\mathsf{lr}^A_x\) (``\(A\)'s last-read register for \(x\)'') that is updated on every \(A\)-step that reads \(x\): if the pre-state shared store satisfies \(\sigma(x)=v\), then the read step sets \(\mathsf{lr}^A_x := v\) in the post-state (in addition to any program-local register updates).
Because \(\mathsf{lr}^A_x\) is part of \(\lambda_A\), the value returned by a read is (i) remembered under perfect recall, and (ii) directly addressable by state predicates such as \(\mathsf{lr}^A_x=v\).
This instrumentation does not add ``ghost information'' beyond what the operational semantics already provides to the executing thread; it only reifies read results as explicit local state.

\paragraph{Desugaring to single-access micro-steps.}
When a high-level command (e.g., a compound guard) reads multiple shared variables, we assume it is compiled or desugared into a sequence of micro-steps, each performing at most one shared-memory access and updating the corresponding last-read register \(\mathsf{lr}^A_x\).
The Peterson case study in \S\ref{sec:peterson} uses this standard desugaring.

\paragraph{Asynchrony and ``no global clock''.}
Because observations are histories of local states (and we do not assume a global clock is observable), a thread cannot in general distinguish whether \emph{extra} environment steps occurred while its local state stayed the same.
This stuttering-insensitivity is essential for modelling what a thread can safely know between its own steps, and it matches the standard asynchronous perfect-recall semantics used in the interpreted-systems literature.

To model \emph{perfect recall}, we compare points in the system by comparing entire observation histories.
Given a run \(\pi\) and time \(i\), let \(k_1<\cdots<k_m\) be the (possibly empty) sequence of indices strictly below \(i\) at which \(A\) acts, i.e., \(\mathsf{act}(k_r)=A\) for \(r=1,\dots,m\) and \(k_m<i\).
Define the history
\[
  \mathsf{Hist}_A(\pi,i) \triangleq \langle \mathsf{obs}_A(s_0),\ \mathsf{obs}_A(s_{k_1+1}),\ \ldots,\ \mathsf{obs}_A(s_{k_m+1}),\ \mathsf{obs}_A(s_i)\rangle,
\]
recording \(A\)'s initial observation, the observation in the post-state of each past \(A\)-step, and the current observation.
Two points \((\pi,i)\) and \((\pi',i')\) are \emph{indistinguishable} to \(A\), written \((\pi,i)\sim_A(\pi',i')\), iff
\(\mathsf{Hist}_A(\pi,i)=\mathsf{Hist}_A(\pi',i')\).
This is the standard epistemic accessibility relation for asynchronous systems with perfect recall; being defined by equality of histories, it is an equivalence relation.
By locality, if \(\mathsf{act}(i)\neq A\) then \(\mathsf{obs}_A(s_{i+1})=\mathsf{obs}_A(s_i)\) and the index \(i\) contributes no new \(A\)-step, so \(\mathsf{Hist}_A(\pi,i)=\mathsf{Hist}_A(\pi,i+1)\): a thread's information set is unchanged across steps it does not perform.

\paragraph{Atomic propositions.}
In addition to the scheduling predicate \(A\ \Active\), we interpret atomic propositions as predicates over global states.
Concretely, we allow atoms that test the values of shared variables (e.g., \(x=v\)) and atoms that test the control location of a thread (e.g., \(\At(A,\ell)\), read ``thread \(A\) is at location \(\ell\)'').
This keeps the logic close to program states while remaining agnostic to a particular instruction set.

\section{Past-Time Temporal Epistemic Logic}
\label{sec:logic}

This section defines the temporal--epistemic language used throughout the paper and gives its semantics over the program model of \S\ref{sec:model}.

\subsection{Syntax}
We assume a set \(\mathit{AP}\) of atomic state predicates \(p\) interpreted over global states (e.g., equalities on shared variables and control-location tests such as \(\At(A,\ell)\)).
Formulas are generated as follows and we use \(\vee\), \(\Implies\), and \(\Iff\) as abbreviations:
\[
\phi,\psi ::= p \;\mid\; A\ \Active \;\mid\; \neg\phi \;\mid\; \phi\wedge\psi \;\mid\; \Prev\,\phi \;\mid\; \phi\ \Since\ \psi \;\mid\; \Knows_A\,\phi .
\]

\subsection{State and transition formulas}
For interference reasoning it is convenient to separate \emph{state} properties from \emph{step} properties.
A \emph{state formula} is an \emph{extensional} predicate: it is built from atomic predicates \(p\) using Boolean connectives only (no \(\Prev\), \(\Since\), or \(\Knows\)).
Its truth depends only on the current global state \(s_i\) (shared store and local components).
In contrast, the scheduling atom \(A\ \Active\) is a predicate of the transition label \(\mathsf{act}(i)\) (which thread executes the \emph{next} step) and is therefore \emph{not} a pure state predicate under our model.
Formulas with \(\Knows_A\) are also \emph{intensional}: their truth depends on the entire \(\sim_A\)-information set induced by \(A\)'s observation history.
A \emph{transition formula} (or \emph{step predicate}) mentions the immediately preceding state using \(\Prev\).
For instance, \((x \neq \Prev\; x)\) is intended to hold exactly when the last transition changed the value of \(x\).
Strictly, \(\Prev\) is an operator on \emph{formulas}, so an equation between a term and its previous value is syntactic sugar: for terms \(t_1,t_2\) we read
\[
  t_1 = \Prev\; t_2 \;\;\triangleq\;\; \Prev\;\True \,\Implies\, \textstyle\bigvee_{v}\bigl(t_1 = v \,\wedge\, \Prev\,(t_2 = v)\bigr),
\]
i.e.\ ``if a predecessor exists, the current value of \(t_1\) equals the value of \(t_2\) in that state.''
The guard \(\Prev\;\True\) makes the equation \emph{vacuously true at the initial time}; this is the intended reading, and it is what makes \(\Always\)-guarded frame and ownership conditions (\S\ref{sec:rg}, \S\ref{sec:peterson}) satisfiable, since otherwise they would be falsified at time \(0\).
We write
\[
  \mathsf{chg}(x) \;\triangleq\; \neg\,(x = \Prev\; x),
\]
so that \(\mathsf{chg}(x)\) is false at time \(0\) and, at any time \(i>0\), holds exactly when the transition from \(s_{i-1}\) to \(s_i\) changed the value of \(x\); we build further step predicates on top of it below.

\paragraph{Convention.}
The atom \(A\ \Active\) is the \emph{outgoing} label of the current state: \((\pi,i)\models A\ \Active\) iff the step from \(s_i\) to \(s_{i+1}\) is performed by \(A\).
Readers familiar with an \emph{incoming}-label convention (in which a state is labelled by the step that produced it) should note that, in our setup, the incoming label at \(s_i\) is expressed as \(\Prev(A\ \Active)\).
This choice is made once and used consistently.

\subsection{Derived operators}
We use standard derived past-time operators:
\[
\Always\;\phi \;\triangleq\; \neg(\True\ \Since\ \neg\phi)
\qquad
\Sometime\;\phi \;\triangleq\; \True\ \Since\ \phi .
\]
Intuitively, \(\Always\;\phi\) means that \(\phi\) has held at every state in the run prefix up to the present, while \(\Sometime\;\phi\) means that \(\phi\) held at some earlier state (possibly the present).

\subsection{Last-step operator}
\label{sec:logic:laststep}
Rely--guarantee arguments often use facts established at a thread's most recent step.
To express this, we distinguish the thread scheduled for the \emph{next} transition from the thread that executed the \emph{previous} transition.
Define the abbreviation
\[
  \mathsf{after}_A \;\triangleq\; \Prev\,(A\ \Active),
\]
so that \(\mathsf{after}_A\) holds at time \(i\) iff \(i>0\) and the transition from \(s_{i-1}\) to \(s_i\) was performed by thread \(A\).
We then define the ``last \(A\)-step'' operator used in \S\ref{sec:overview}:
\begin{equation}
\label{eq:lastA}
\PrevA{A}{\phi} \;\triangleq\; \neg\mathsf{after}_A\ \Since\ (\mathsf{after}_A \wedge \phi).
\end{equation}
Thus \(\PrevA{A}{\phi}\) holds at \((\pi,i)\) iff \(\phi\) held at the most recent time \(j\le i\) such that \(\mathsf{after}_A\) held at \(j\) (i.e., the most recent state reached by an \(A\)-step), and \(\mathsf{after}_A\) did not hold at any \(k\) with \(j<k\le i\).
If thread \(A\) has not yet executed any step, then \(\mathsf{after}_A\) has never held, and \(\PrevA{A}{\phi}\) is false.

We also use the derived abbreviation \(\PreA{A}{\phi} \triangleq \PrevA{A}{(\Prev\,\phi)}\), which refers to the \emph{pre-state} of \(A\)'s last step.
In particular, if \(\mathsf{after}_A\) holds now, then \(\PreA{A}{\phi}\) is equivalent to \(\Prev\,\phi\).

\subsection{Write events and ``last occurrence'' reasoning}
Using \(\mathsf{after}_A\) and \(\mathsf{chg}(x)\) we define a coarse write event:
\[
  \mathsf{Write}_A(x) \;\triangleq\; \mathsf{after}_A \wedge \mathsf{chg}(x).
\]
Because we treat each program step as atomic, \(\mathsf{Write}_A(x)\) holds at time \(i\) exactly when the transition from \(s_{i-1}\) to \(s_i\) was an \(A\)-step that changed \(x\).

Finally, we occasionally need ``last occurrence'' reasoning for events other than ``last \(A\)-step''.
Generalizing \eqref{eq:lastA}, for any marker \(W\) we define
\[
  \mathsf{Last}_W(\phi) \;\triangleq\; \neg W\ \Since\ (W \wedge \phi).
\]
For example, \(\mathsf{Last}_{\mathsf{Write}_A(x)}(x=v)\) states that at the state reached by the most recent \(A\)-write to \(x\), the value of \(x\) was \(v\).

\paragraph{Happens-before.}
We use a ``happens-before'' abbreviation to express that one event was the most recent occurrence of its kind and was preceded by an occurrence of another:
\[
\phi\ \HB\ \psi \;\triangleq\; \neg\psi\ \Since\ \bigl(\psi \wedge \Prev\,\Sometime\,\phi\bigr).
\]
Thus \(\phi\ \HB\ \psi\) holds at \((\pi,i)\) iff there is a most recent time \(j\le i\) at which \(\psi\) held, and \(\phi\) held at some strictly earlier time \(k<j\).
This is the pattern we actually need: \(\psi\) is the event that just happened (no re-occurrence since), and \(\phi\) already happened before it.
We use it in \S\ref{sec:peterson} for statements of the form ``some \code{victim=\(i\)} write occurred before the most recent \code{victim=\(j\)} write.''

\subsection{Semantics}
A model consists of the set of runs induced by the operational semantics (\S\ref{sec:model}) together with the indistinguishability relations \(\sim_A\).
We define satisfaction \((\pi,i)\models \phi\) inductively:
\begin{itemize}
  \item \((\pi,i)\models p\) iff \(p\) holds of the global state \(s_i\).
  \item \((\pi,i)\models A\ \Active\) iff \(\mathsf{act}(i)=A\).
  \item \((\pi,i)\models \neg\phi\) iff not \((\pi,i)\models \phi\).
  \item \((\pi,i)\models \phi\wedge\psi\) iff \((\pi,i)\models\phi\) and \((\pi,i)\models\psi\).
  \item \((\pi,i)\models \Prev\,\phi\) iff \(i>0\) and \((\pi,i-1)\models \phi\).
  \item \((\pi,i)\models \phi\ \Since\ \psi\) iff there exists \(j\) with \(0\le j\le i\) such that \((\pi,j)\models \psi\) and for all \(k\) with \(j<k\le i\), \((\pi,k)\models \phi\).
  \item \((\pi,i)\models \Knows_A\,\phi\) iff for all \((\pi',i')\) with \((\pi,i)\sim_A(\pi',i')\), we have \((\pi',i')\models \phi\).
\end{itemize}

As \(\sim_A\) is defined by equality of histories, it is an equivalence relation; consequently \(\Knows_A\) satisfies the S5 principles: normality (the K axiom), truth (T, factivity), and positive and negative introspection (4 and 5).
Recall from \S\ref{sec:model} that, by locality, \(A\)'s history (hence its information set and its knowledge) is unchanged across steps it does not perform.

\section{A Sound Proof System}
\label{sec:proofsystem}

We present a deductive system for the fragment of the logic used in the sequel.
The system is intended to support paper-and-pencil proofs (and, ultimately, automation); it is not meant as a complete axiomatization of all validities of temporal--epistemic logic.
Soundness is with respect to the semantics of \S\ref{sec:logic}.

\subsection{Propositional core}
A sequent has the form \(\Gamma \vdash \phi\), where \(\Gamma\) is a finite set of formulas (assumptions) and \(\phi\) is a formula (conclusion).
We write \(\bigwedge\Gamma\) for the conjunction of all formulas in \(\Gamma\).

We assume a standard sound sequent-style proof system for propositional logic (conjunction, disjunction, and negation), including:
\[
\frac{\Gamma\vdash \phi \qquad \Gamma\vdash \phi\Implies\psi}{\Gamma\vdash \psi}\ (\textsc{mp})
\qquad
\frac{\Gamma,\phi\vdash\psi \qquad \Gamma,\neg\phi\vdash\psi}{\Gamma\vdash \psi}\ (\textsc{lemE})
\]
and the usual introduction/elimination rules for \(\wedge\) and \(\vee\).

\subsection{Past-time operators}
The following rules capture the defining fixpoint properties of \(\Prev\) and \(\Since\).

\paragraph{Previous.}
\[
\frac{\Gamma\vdash \phi}{\Prev\;\Gamma \vdash \Prev\;\True \Implies \Prev\;\phi}\ (\textsc{prev})
\]
where \(\Prev\;\Gamma \triangleq \{\Prev\;\psi \mid \psi\in\Gamma\}\).
The guard \(\Prev\;\True\) ensures that we are not at the initial time; without it, the derived form \(\vdash \Prev\,\phi\) would be unsound even when \(\vdash \phi\) (at time \(0\) there is no predecessor and every formula of the form \(\Prev\,\phi\) is false).
In practice one uses this rule in the form: if the current time is not initial (i.e., \(\Prev\;\True\) is assumed) and all assumptions in \(\Gamma\) held at the predecessor, then \(\phi\) held there too.

\paragraph{Since.}
\[
\frac{\Gamma\vdash \psi}{\Gamma\vdash \phi\ \Since\ \psi}\ (\textsc{sinceI$_1$})
\qquad
\frac{\Gamma\vdash \phi \qquad \Gamma\vdash \Prev(\phi\ \Since\ \psi)}{\Gamma\vdash \phi\ \Since\ \psi}\ (\textsc{sinceI$_2$})
\]
\[
\frac{\Gamma\vdash \phi\ \Since\ \psi \qquad \Gamma,\psi\vdash \chi \qquad \Gamma,\Prev(\phi\ \Since\ \psi),\phi\vdash \chi}{\Gamma\vdash \chi}\ (\textsc{sinceE})
\]
Rule \textsc{sinceI$_1$} corresponds to choosing the witness time \(j=i\).
Rule \textsc{sinceI$_2$} corresponds to extending an existing witness by one step while maintaining \(\phi\).
Rule \textsc{sinceE} is the elimination rule corresponding to a single \emph{unfolding} of \(\phi\Since\psi \equiv \psi \vee (\phi \wedge \Prev(\phi\Since\psi))\): it splits on whether the witness is at the current time (left case, hypothesis \(\psi\)) or strictly earlier (right case, hypotheses \(\phi\) and \(\Prev(\phi\Since\psi)\)).
It is \emph{not} an induction principle: its right premise supplies \(\Prev(\phi\Since\psi)\), not the inductive hypothesis \(\Prev\,\chi\), so it cannot by itself propagate \(\chi\) across unboundedly many steps.

\paragraph{Since-induction.}
For invariance reasoning we use the genuine least-fixpoint induction rule for \(\Since\):
\[
\frac{\Gamma\vdash \psi\Implies\chi \qquad \Gamma\vdash (\phi\wedge\Prev\,\chi)\Implies\chi}{\Gamma\vdash (\phi\ \Since\ \psi)\Implies\chi}\ (\textsc{sinceInd})
\]
\emph{Side condition:} every formula in \(\Gamma\) is past-hereditary, i.e.\ holds at a point only if it holds at all earlier points of the same run; \(\Always\)-formulas (and \(\Gamma=\emptyset\)) satisfy this.
The decisive difference from \textsc{sinceE} is that the second premise carries \(\Prev\,\chi\) as an inductive hypothesis, which is exactly what is needed to transport \(\chi\) along the ``in between'' interval of a \(\Since\) witness.
We will see in \S\ref{sec:rg} that the invariance Lemma~\ref{lem:inv-from-pres} is an instance of \textsc{sinceInd} (with \(\phi=\True\), \(\psi=\mathsf{Init}\wedge I\), and \(\chi=\Always\,I\)).

We define \(\Sometime\) and \(\Always\) as abbreviations (\S~\ref{sec:logic}); their usual derived rules follow.

\subsection{Epistemic operator}
The epistemic modality follows the standard S5 principles for each thread \(A\).
We present them as sequent rules.

\paragraph{Normality.}
\[
\frac{\Gamma\vdash \phi}{\Knows_A\Gamma \vdash \Knows_A\phi}\ (\textsc{K})
\]
where \(\Knows_A\Gamma \triangleq \{\Knows_A\psi \mid \psi\in\Gamma\}\).

\paragraph{Truth and introspection.}
\[
\frac{\Gamma\vdash \Knows_A\phi}{\Gamma\vdash \phi}\ (\textsc{T})
\qquad
\frac{\Gamma\vdash \Knows_A\phi}{\Gamma\vdash \Knows_A\Knows_A\phi}\ (\textsc{4})
\qquad
\frac{\Gamma\vdash \neg\Knows_A\phi}{\Gamma\vdash \Knows_A\neg\Knows_A\phi}\ (\textsc{5})
\]

\subsection{Soundness}
\begin{theorem}[Soundness]
\label{thm:soundness}
If \(\Gamma\vdash \phi\) is derivable, then \(\Gamma\models \phi\), i.e., for every model and every point \((\pi,i)\), if \((\pi,i)\models \bigwedge\Gamma\) then \((\pi,i)\models \phi\).
\end{theorem}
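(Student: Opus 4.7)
I would proceed by structural induction on the derivation of $\Gamma \vdash \phi$, fixing in each case an arbitrary model, point $(\pi,i)$, and the hypothesis $(\pi,i) \models \bigwedge \Gamma$, then discharging $(\pi,i) \models \phi$ by case analysis on the last rule applied, using the semantics of \S\ref{sec:logic} together with the inductive soundness of the premises. The propositional core (the rules for $\wedge$, $\vee$, $\neg$, together with \textsc{mp} and \textsc{lemE}) is entirely local to the single point $(\pi,i)$ and reduces to a uniform Boolean-evaluation argument, which I would dispatch in one paragraph.

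For the past-time fragment, the rule (\textsc{prev}) reduces to the observation that $\Prev\,\True$ forces $i>0$ and each $\Prev\,\psi \in \Prev\,\Gamma$ forces $\psi$ to hold at $(\pi,i-1)$, so the IH places $\phi$ at $i-1$, hence $\Prev\,\phi$ at $i$. The introduction rules for \Since{} unfold directly to witness choices: (\textsc{sinceI$_1$}) chooses $j=i$, and (\textsc{sinceI$_2$}) extends an existing witness at $i-1$ by one step with a fresh $\phi$ at $i$. The elimination rule (\textsc{sinceE}) is the only subtle case: given $(\pi,i) \models \phi \Since \psi$ with semantic witness $j \le i$, I would split on whether $j=i$ (then $(\pi,i) \models \psi$ and the second premise yields $\chi$) or $j<i$ (then the same witness shows $(\pi,i-1) \models \phi \Since \psi$, hence $(\pi,i) \models \Prev(\phi \Since \psi)$; the required $\phi$ at $(\pi,i)$ comes from $i \in (j,i]$, so the third premise yields $\chi$).

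For the epistemic rules, (\textsc{T}), (\textsc{4}), and (\textsc{5}) follow respectively from reflexivity, transitivity, and Euclideanness of $\sim_A$, all immediate because $\sim_A$ is defined by equality of histories. The normality rule (\textsc{K}) demands more care: assuming each $\Knows_A \psi \in \Knows_A\Gamma$ holds at $(\pi,i)$, I would pick an arbitrary $(\pi',i')$ with $(\pi,i) \sim_A (\pi',i')$, observe that each $\psi \in \Gamma$ holds at $(\pi',i')$ by definition of $\Knows_A$, and then instantiate the IH at the new point in the (possibly different) run $\pi'$ to obtain $(\pi',i') \models \phi$; universal quantification over $\sim_A$-accessible points then delivers $(\pi,i) \models \Knows_A \phi$.

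The main conceptual obstacle is (\textsc{K}): it is the only rule whose soundness argument crucially uses the \emph{uniform} quantification of $\models$ over all models and all points, rather than a local consequence at the fixed $(\pi,i)$, so the IH must be phrased as a universally quantified entailment rather than validity at a single point. The main technical obstacle is (\textsc{sinceE}), where one must track the witness $j$ through the case split and ensure that the $\phi$ consumed by the third premise is the instance at time $i$, not at some earlier point. A minor pitfall worth flagging in the write-up is the $\Prev\,\True$ side condition in (\textsc{prev}): without it, the rule would be unsound at $i=0$, where $\Prev\,\phi$ is false by semantics regardless of $\phi$.
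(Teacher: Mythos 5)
Your proposal is correct and follows essentially the same route as the paper's proof sketch: induction on the derivation, with the \textsc{prev} case handled via the $\Prev\True$ side condition, the \Since{} rules via explicit witness manipulation, and the epistemic rules via the equivalence-relation properties of $\sim_A$ (your observation that \textsc{K} requires the induction hypothesis to be the uniform entailment $\Gamma\models\phi$ over all points, not validity at a fixed point, is exactly the right care to take). The only cosmetic divergence is that you discharge \textsc{sinceE} by a direct case split on the witness ($j=i$ versus $j<i$) where the paper loosely calls it an ``induction argument''; given the shape of the rule, with $\Prev(\phi\Since\psi)$ rather than $\Prev\chi$ in the third premise, your case split is precisely what is needed and is, if anything, the cleaner justification.
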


\begin{proof}[Proof sketch]
Soundness of the propositional rules is standard; for the remaining rules we argue directly from the semantics.
For \textsc{prev}, suppose \((\pi,i)\models \bigwedge\Prev\;\Gamma\) and \((\pi,i)\models \Prev\;\True\).
The latter gives \(i>0\), and \((\pi,i)\models \Prev\;\psi\) yields \((\pi,i-1)\models \psi\) for each \(\psi\in\Gamma\), so \((\pi,i-1)\models\bigwedge\Gamma\).
By soundness of the premise \(\Gamma\vdash\phi\) we get \((\pi,i-1)\models\phi\), hence \((\pi,i)\models \Prev\;\phi\).
Rules for \(\Since\) are sound by unfolding the semantic definition: \textsc{sinceI$_1$} takes the witness time \(j=i\); \textsc{sinceI$_2$} keeps the same witness time and extends the ``in between'' interval by one step while preserving \(\phi\); \textsc{sinceE} is the elimination from one unfolding (case split on whether the witness is at the current time or strictly earlier).
For \textsc{sinceInd}, assume the premises and a point \((\pi,i)\) with \((\pi,i)\models\bigwedge\Gamma\) and \((\pi,i)\models \phi\Since\psi\); let \(j\le i\) be a witness, so \((\pi,j)\models\psi\) and \((\pi,k)\models\phi\) for \(j<k\le i\).
We show \((\pi,m)\models\chi\) for \(j\le m\le i\) by induction on \(m\): the side condition makes \(\bigwedge\Gamma\) hold at every \(m\le i\), so at \(m=j\) the first premise gives \(\chi\), and for \(m{+}1\) the inductive hypothesis gives \((\pi,m)\models\chi\), hence \((\pi,m{+}1)\models\Prev\,\chi\), and with \((\pi,m{+}1)\models\phi\) the second premise gives \(\chi\); taking \(m=i\) yields \((\pi,i)\models\chi\).
Rules \textsc{K}, \textsc{T}, \textsc{4}, and \textsc{5} are sound because each \(\sim_A\) is an equivalence relation, and \(\Knows_A\) is interpreted as universal quantification over \(\sim_A\)-accessible points.
\end{proof}

\section{Rely--Guarantee Style Reasoning via Stability}
\label{sec:rg}

This section shows how a rely--guarantee style reasoning pattern emerges naturally in past-time epistemic logic.
The key ingredients are: (i) a way to talk about the state reached by a thread's last step (\(\PrevA{A}{\cdot}\), \S\ref{sec:logic}), and (ii) a logical characterization of \emph{stability} under environment steps.

\subsection{Step predicates and stability}
Because the logic has only past-time operators, we express constraints on the \emph{most recent} transition using \(\Prev\) and the tag \(\mathsf{after}_A \triangleq \Prev(A\ \Active)\).
For example, the assertion ``if the last step was by thread \(B\), then shared variable \(x\) did not change'' is expressed as
\[
  \Always\bigl(\mathsf{after}_B \Implies (x = \Prev\; x)\bigr).
\]

\begin{definition}[Stability under the environment]
\label{def:stability}
Let \(A\in\Tid\) and let \(\phi\) be a state formula (in the extensional fragment of \S\ref{sec:logic}).
We say that \(\phi\) is \emph{stable under the environment of \(A\)} if
\[
  \mathsf{Stable}_A(\phi) \;\triangleq\; \Always\bigl(\neg\mathsf{after}_A \Implies (\Prev\;\phi \Implies \phi)\bigr).
\]
\end{definition}

Intuitively, \(\mathsf{Stable}_A(\phi)\) means that any step not performed by \(A\) preserves \(\phi\).

\subsection{Algebra of stable assertions}
Stability is the logical form of the familiar rely--guarantee side condition ``\(P\) is stable under the rely''.
A useful feature of Definition~\ref{def:stability} is that stability can be manipulated propositionally, which makes stability checks modular.

\begin{lemma}[Closure properties]
\label{lem:stable-closure}
For any thread \(A\) and state formulas \(\phi,\psi\):
\begin{enumerate}
  \item If \(\mathsf{Stable}_A(\phi)\) and \(\mathsf{Stable}_A(\psi)\), then \(\mathsf{Stable}_A(\phi\wedge\psi)\).
  \item If \(\mathsf{Stable}_A(\phi)\) and \(\mathsf{Stable}_A(\psi)\), then \(\mathsf{Stable}_A(\phi\vee\psi)\).
\end{enumerate}
\end{lemma}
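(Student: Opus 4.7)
The plan is to prove both parts by unfolding the definition of $\mathsf{Stable}_A$ and carrying out pointwise Boolean reasoning at each position of each run. Fix a run $\pi$ and a time $i\ge 0$; it suffices to show that $(\pi,i)\models \neg\mathsf{after}_A \wedge \Prev(\phi\circ\psi)$ implies $(\pi,i)\models \phi\circ\psi$ for $\circ\in\{\wedge,\vee\}$, since the two stability hypotheses are universally quantified in past time by $\Always$ and thus instantiate at whatever $i$ we pick.

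The key observation is that $\Prev$ distributes over Boolean connectives at every position $i>0$: unfolding the semantic clause for $\Prev$ gives $(\pi,i)\models\Prev(\phi\wedge\psi)$ iff $(\pi,i-1)\models\phi\wedge\psi$ iff $(\pi,i)\models\Prev\phi\wedge\Prev\psi$, and analogously for $\vee$. For the edge case $i=0$, $\Prev\chi$ is false for every $\chi$, so the implication $\Prev(\phi\circ\psi)\Implies(\phi\circ\psi)$ holds vacuously at that position and needs no further argument. With this in hand, part~(1) is immediate: from $\Prev(\phi\wedge\psi)$ extract $\Prev\phi$ and $\Prev\psi$, then apply $\mathsf{Stable}_A(\phi)$ and $\mathsf{Stable}_A(\psi)$ at $i$ (both using the shared $\neg\mathsf{after}_A$) to conclude $\phi$ and $\psi$, hence $\phi\wedge\psi$. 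Part~(2) proceeds by case analysis on $\Prev\phi\vee\Prev\psi$; each disjunct, together with the matching stability premise, yields $\phi$ or $\psi$, from which $\phi\vee\psi$ follows.

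No step is genuinely difficult; the main bookkeeping point is simply to remember to handle $i=0$ separately (vacuously) and to verify that both stability hypotheses are applied at the same index $i$ at which the goal is being examined. An alternative formal route would be to derive the result inside the proof system of \S\ref{sec:proofsystem} using the rules for $\Since$ to unfold $\Always$ and combining the two stability assumptions propositionally, but the semantic presentation is substantially shorter and makes the role of each hypothesis transparent. I would not expect closure under negation or implication to be provable by the same template: the stability of $\neg\phi$ would require the \emph{converse} direction of the one-step preservation, which $\mathsf{Stable}_A$ does not supply, and this is consistent with the lemma being deliberately restricted to $\wedge$ and $\vee$.
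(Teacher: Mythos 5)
Your proof is correct and follows essentially the same route as the paper's: unfold $\mathsf{Stable}_A$, use that $\Prev$ distributes over $\wedge$ (and commutes with $\vee$ up to implication), apply each stability hypothesis pointwise, and split cases for the disjunction. Your explicit handling of the $i=0$ edge case and the closing remark about why negation fails are correct refinements the paper leaves implicit.
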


\begin{proof}[Proof sketch]
Unfold Definition~\ref{def:stability}.
For (1), use that \(\Prev(\phi\wedge\psi)\) is equivalent to \((\Prev\;\phi)\wedge(\Prev\;\psi)\): on an environment step, \(\Prev(\phi\wedge\psi)\) implies both \(\Prev\;\phi\) and \(\Prev\;\psi\), so the two stability assumptions yield \(\phi\) and \(\psi\) now.
For (2), use that \(\Prev(\phi\vee\psi)\) implies \((\Prev\;\phi)\vee(\Prev\;\psi)\) and split cases.
\end{proof}

Note that stability is not closed under logical consequence in general; a short counterexample is given in Appendix~\ref{app:nf:stability}.

\paragraph{Frame conditions yield stability.}
A very common rely condition is a \emph{frame} property saying that the environment does not change some variable \(x\):
\[
  \mathsf{Frame}_A(x) \triangleq \Always\bigl(\neg\mathsf{after}_A \Implies (x=\Prev\; x)\bigr).
\]
From \(\mathsf{Frame}_A(x)\) we can derive \(\mathsf{Stable}_A(x=v)\) for any value \(v\), and by Lemma~\ref{lem:stable-closure} we can build stable assertions about tuples of frame-protected variables.
In \S\ref{sec:peterson} we exploit exactly this pattern for the ownership assumptions on \code{flag[i]}.

\subsection{From last-step facts to current facts}
The last-step operator \(\PrevA{A}{\phi}\) is designed so that, if it holds at time \(i\), then there exists a unique time \(j\le i\) that is the most recent state reached by an \(A\)-step, and \(\phi\) held at \(j\).
Between \(j\) and \(i\), all steps are by the environment (i.e., \(\neg\mathsf{after}_A\) holds on each intermediate state).

\begin{lemma}[Stability lifting]
\label{lem:stability-lift}
For any thread \(A\) and any state formula \(\phi\),
\[
  \mathsf{Stable}_A(\phi)\ \wedge\ \PrevA{A}{\phi}\ \models\ \phi.
\]
\end{lemma}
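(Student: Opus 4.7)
The plan is to unfold both hypotheses at the point $(\pi, i)$ into their semantic content and then walk $\phi$ forward from the time of $A$'s last step up to $i$ by a short induction on the intervening environment steps.

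First, by the semantic clause for $\Since$ applied to $(\pi, i) \models \PrevA{A}{\phi}$, I obtain an index $j$ with $0 \le j \le i$ such that $(\pi, j) \models \mathsf{after}_A \wedge \phi$ and $(\pi, k) \models \neg\mathsf{after}_A$ for every $k$ with $j < k \le i$. In particular $(\pi, j) \models \phi$. Second, unfolding $\Always$ in $(\pi, i) \models \mathsf{Stable}_A(\phi)$, I get that for every $k$ with $0 \le k \le i$, the implication $\neg\mathsf{after}_A \Implies (\Prev\;\phi \Implies \phi)$ holds at $(\pi, k)$.

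The core step is then an induction showing $(\pi, m) \models \phi$ for every $m$ with $j \le m \le i$. The base case $m = j$ is immediate from the $\Since$ witness. For the induction step, suppose $(\pi, m) \models \phi$ and $j \le m < i$; then $(\pi, m+1) \models \Prev\;\phi$ by the semantics of $\Prev$, and since $j < m+1 \le i$ I have $(\pi, m+1) \models \neg\mathsf{after}_A$ from the first step, so the stability instance at $m+1$ yields $(\pi, m+1) \models \phi$. Instantiating $m = i$ gives the conclusion.

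There is no real obstacle beyond index-keeping; the only subtlety is the boundary case $j = i$, where $\mathsf{after}_A$ holds at the present and the induction is vacuous — $\phi$ is read directly from the $\Since$ witness without ever invoking stability. This case is handled automatically by the stated induction and reflects the design choice behind $\PrevA{A}{\cdot}$: because the $\Since$ formula pins down a unique most-recent $A$-step and forces every strictly later step in $(j, i]$ to be an environment step, the $\Always$-form of stability applies uniformly across that interval and threads $\phi$ from time $j$ to time $i$.
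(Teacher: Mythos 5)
Your proposal is correct and follows essentially the same argument as the paper's proof: extract the witness time $j$ from the semantics of $\Since$ in $\PrevA{A}{\phi}$, note that all steps in $(j,i]$ satisfy $\neg\mathsf{after}_A$, and then propagate $\phi$ forward by induction using the $\Always$-guarded stability implication at each intermediate point. The explicit treatment of the boundary case $j=i$ is a nice touch but not a substantive difference.
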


\begin{proof}[Proof sketch]
Let \((\pi,i)\) satisfy \(\mathsf{Stable}_A(\phi)\wedge \PrevA{A}{\phi}\), and let \(j\le i\) be the witness time for \(\PrevA{A}{\phi}\) in the semantics of \(\Since\) (cf.\ \eqref{eq:lastA}).
By the definition of \(\PrevA{A}{\phi}\) we have \((\pi,j)\models \phi\), and \((\pi,k)\models \neg\mathsf{after}_A\) for all \(k\) with \(j<k\le i\).
If \(j=i\) the conclusion is immediate; otherwise we show \((\pi,k)\models\phi\) by induction on \(k=j,\dots,i\).
The base case \(k=j\) is given.
For the step, assume \((\pi,k)\models\phi\) with \(j\le k<i\); then \(k+1>0\), so \((\pi,k+1)\models\Prev\,\phi\), while \((\pi,k+1)\models\neg\mathsf{after}_A\) by the hypothesis on \(\PrevA{A}{\phi}\).
Stability then yields \((\pi,k+1)\models\phi\).
Taking \(k=i\) gives the claim.
\end{proof}

\subsection{What knowledge buys us: persistence without re-reading}
A purely temporal statement \(\PrevA{A}{\phi}\) says that \(\phi\) held at \(A\)'s last step, but it does \emph{not} say that \(A\) is entitled to use \(\phi\) as a justified local assumption.
The epistemic modality makes that justification explicit: if \(\PrevA{A}{(\Knows_A\phi)}\) holds, then \(A\) previously had enough observational evidence to rule out all alternatives where \(\phi\) fails.

A key consequence of our asynchronous perfect-recall semantics is that a thread's information set does not change between its own steps.
Therefore, if \(A\) knew something at its last step, then it still knows it now (even if it has not executed any further instructions).

\begin{lemma}[Knowledge persistence between \(A\)-steps]
\label{lem:know-persist}
For any thread \(A\) and any formula \(\phi\),
\[
  \PrevA{A}{(\Knows_A\phi)}\ \models\ \Knows_A\phi .
\]
\end{lemma}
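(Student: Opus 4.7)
The plan is to exploit the asynchronous perfect-recall design: \(A\)'s information set is invariant under environment steps, so the \(\sim_A\)-equivalence class at the current time coincides with the class at the witness time delivered by \(\PrevA{A}{(\Knows_A\phi)}\), and the epistemic fact transports unchanged.

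First I would unfold \(\PrevA{A}{(\Knows_A\phi)}\) using its defining pattern \eqref{eq:lastA}: assuming \((\pi,i)\models \PrevA{A}{(\Knows_A\phi)}\), the semantic clause for \(\Since\) yields a witness \(j\) with \(0<j\le i\) such that \((\pi,j)\models \mathsf{after}_A \wedge \Knows_A\phi\) and \((\pi,k)\models \neg\mathsf{after}_A\) for every \(k\in(j,i]\). Translating via \(\mathsf{after}_A \triangleq \Prev(A\ \Active)\), this says \(\mathsf{act}(j-1)=A\) and \(\mathsf{act}(k)\neq A\) for every \(k\in[j,i-1]\), i.e., every transition taken after \(s_j\) and up to \(s_i\) is an environment step for \(A\).

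Next I would establish \((\pi,j)\sim_A(\pi,i)\) by comparing the compressed histories of \S\ref{sec:model}. Because no \(A\)-step occurs in \([j,i-1]\), the list \(k_1<\cdots<k_m\) of \(A\)-step indices strictly below \(i\) coincides with the corresponding list strictly below \(j\), with \(k_m=j-1\). Consequently the interior entries of \(\mathsf{Hist}_A(\pi,j)\) and \(\mathsf{Hist}_A(\pi,i)\) agree pointwise, including the entry \(\mathsf{obs}_A(s_{k_m+1})=\mathsf{obs}_A(s_j)\). For the trailing ``current-state'' entry, the standard assumption that a single-thread step modifies only the acting thread's local component gives \(\lambda_A(s_j)=\lambda_A(s_{j+1})=\cdots=\lambda_A(s_i)\), and hence \(\mathsf{obs}_A(s_j)=\mathsf{obs}_A(s_i)\). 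Combining the two observations, \(\mathsf{Hist}_A(\pi,j)=\mathsf{Hist}_A(\pi,i)\), i.e., \((\pi,j)\sim_A(\pi,i)\).

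The conclusion is then immediate: from \((\pi,j)\models \Knows_A\phi\) and \((\pi,i)\sim_A(\pi,j)\), using that \(\sim_A\) is an equivalence (equivalently, the S5 rules \textsc{4} and \textsc{5}), every \(\sim_A\)-neighbour of \((\pi,i)\) is a \(\sim_A\)-neighbour of \((\pi,j)\), so \(\phi\) holds at all of them and \((\pi,i)\models \Knows_A\phi\). The main hurdle I expect is purely bookkeeping around the trailing entry of the compressed history: one must confirm that the boundary entry \(\mathsf{obs}_A(s_{k_m+1})\) appears in both histories and that the ``no global clock'' design is what ensures the trailing entries agree; without the frame condition on environment steps, this equality would fail and the argument would collapse.
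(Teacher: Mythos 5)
Your proposal is correct and follows essentially the same route as the paper's own proof: extract the witness time \(j\) from the \(\Since\)-semantics of \(\PrevA{A}{(\Knows_A\phi)}\), observe that no \(A\)-step occurs in \((j,i]\) so that \(\mathsf{Hist}_A(\pi,j)=\mathsf{Hist}_A(\pi,i)\) and hence \((\pi,j)\sim_A(\pi,i)\), and transfer \(\Knows_A\phi\) along the resulting equality of \(\sim_A\)-classes. Your version is in fact more careful than the paper's sketch in making explicit the frame assumption that non-\(A\) steps leave \(\lambda_A\) unchanged, which is exactly what forces the trailing history entries to agree.
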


\begin{proof}[Proof sketch]
Let \((\pi,i)\models \PrevA{A}{(\Knows_A\phi)}\), and let \(j\le i\) be the witness time, so that \(\mathsf{after}_A\wedge\Knows_A\phi\) holds at \(j\) and \(\neg\mathsf{after}_A\) holds at every \(k\) with \(j<k\le i\).
By the definition of \(\PrevA{A}{\cdot}\), no \(A\)-step occurs strictly between \(j\) and \(i\); hence, by locality (\S\ref{sec:model}), \(\mathsf{obs}_A\) is constant on that interval and no new \(A\)-step is recorded.
Consequently \(\mathsf{Hist}_A(\pi,i)=\mathsf{Hist}_A(\pi,j)\), i.e.\ \((\pi,i)\sim_A(\pi,j)\), so the \(\sim_A\)-class is the same at \(i\) and at \(j\).
Since \(\Knows_A\phi\) holds at \(j\), it holds at \(i\).
\end{proof}

Lemma~\ref{lem:know-persist} explains why stability conditions are unavoidable: a thread can carry forward
\emph{knowledge}, but to carry forward the \emph{truth} of a state formula \(\phi\) it must also argue that
the environment cannot falsify \(\phi\) between its own steps.

\paragraph{Objective stability vs.\ epistemic quantification.}
\(\mathsf{Stable}_A(\phi)\) is evaluated on the \emph{actual} run (every environment step preserves \(\phi\)), whereas \(\Knows_A\phi\) quantifies over \emph{all} points compatible with \(A\)'s history.
Under the closed-program convention adopted below, the set of those alternatives is restricted to runs satisfying the interface specification, which already includes the rely itself; as a result, Corollary~\ref{cor:epistemic-lift} draws the useful factivity conclusion (\(\phi\) now) without having to wrap the rely inside \(\Knows_A\).

\subsection{Epistemic rely--guarantee}
In practice, a thread often establishes facts via local reasoning about what it knows at its last step.
Combining Lemma~\ref{lem:stability-lift} with factivity of knowledge (the S5 axiom \(\Knows_A\phi\Implies\phi\)) yields the following derived rule.

\begin{corollary}[Epistemic stability lifting]
\label{cor:epistemic-lift}
For any thread \(A\) and any state formula \(\phi\),
\[
  \mathsf{Stable}_A(\phi)\ \wedge\ \PrevA{A}{(\Knows_A\phi)}\ \models\ \phi.
\]
\end{corollary}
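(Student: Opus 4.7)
The plan is to reduce the corollary to Lemma~\ref{lem:stability-lift} by applying the S5 truth axiom (rule T) pointwise inside the last-step operator. Since $\PrevA{A}{(\Knows_A\phi)}$ unfolds (via \eqref{eq:lastA}) to a $\Since$-formula whose witness state satisfies $\mathsf{after}_A \wedge \Knows_A\phi$, factivity of $\Knows_A$ lets us replace $\Knows_A\phi$ by $\phi$ at that witness state, yielding $\PrevA{A}{\phi}$; the conclusion then follows from the already-established stability lifting lemma.

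In more detail, I would fix a point $(\pi,i)$ satisfying $\mathsf{Stable}_A(\phi) \wedge \PrevA{A}{(\Knows_A\phi)}$ and unfold the second conjunct using the semantics of $\Since$: there is a maximal witness index $j \le i$ with $(\pi,j) \models \mathsf{after}_A \wedge \Knows_A\phi$ and $(\pi,k) \models \neg\mathsf{after}_A$ for every $j < k \le i$. Next, apply T at $(\pi,j)$: because $\sim_A$ is reflexive, $(\pi,j) \models \Knows_A\phi$ implies $(\pi,j) \models \phi$. Reassembling the same witness $j$ under the $\Since$ clause yields $(\pi,i) \models \PrevA{A}{\phi}$. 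Invoking Lemma~\ref{lem:stability-lift} on $\mathsf{Stable}_A(\phi) \wedge \PrevA{A}{\phi}$ then delivers $(\pi,i) \models \phi$.

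The argument is short and I do not anticipate a genuine obstacle; the only subtle point worth flagging is that factivity is being applied \emph{under} a past-time context, which is legitimate because T is a pointwise semantic implication (rather than a syntactic rule that would need to commute with $\Since$). It is also worth noting that the conclusion delivers $\phi$ but not $\Knows_A\phi$ at time $i$: asynchronous perfect recall admits indistinguishable runs with additional unobserved environment steps, so lifting \emph{knowledge} (not just truth) across the environment interval would require an epistemic counterpart of the stability premise, consistent with the paper's earlier remark on objective versus epistemic stability.
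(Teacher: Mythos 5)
Your proposal is correct and follows exactly the paper's intended route: apply factivity of $\Knows_A$ at the witness point of the $\Since$-unfolding to obtain $\PrevA{A}{\phi}$ from $\PrevA{A}{(\Knows_A\phi)}$, then conclude by Lemma~\ref{lem:stability-lift}. The extra care you take in justifying factivity under a past-time context, and the closing remark on why only truth (not knowledge) is lifted, match the paper's own discussion.
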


\begin{proof}[Proof sketch]
Although \(\Knows_A\phi\) is not itself a state formula, factivity gives \(\Knows_A\phi\Implies\phi\) \emph{pointwise}, so from \(\PrevA{A}{(\Knows_A\phi)}\) we deduce \(\PrevA{A}{\phi}\) (apply factivity at the witness time \(j\) of the \(\Since\) semantics in \eqref{eq:lastA}).
Lemma~\ref{lem:stability-lift}, applied to the state formula \(\phi\), then yields \(\phi\) at the current time.
\end{proof}

This corollary can be read as a rely--guarantee principle:
``
if thread \(A\) established \(\phi\) at its last step (by knowing it then), and \(\phi\) is stable under the environment since that step, then \(\phi\) holds now.
''

\subsection{Encoding rely and guarantee conditions}
A classic rely--guarantee proof assigns to each thread \(A\) a \emph{guarantee} describing what \(A\) may do in one step and a \emph{rely} describing what the environment may do.
In our setting, a guarantee for \(A\) is naturally expressed as a formula of the shape
\[
  \mathsf{G}_A \;\triangleq\; \Always\bigl(\mathsf{after}_A \Implies G_A^{\mathsf{step}}\bigr),
\]
where $G_A^{\mathsf{step}}$ is a predicate over the current state and the previous state (expressible using $\Prev$).
Similarly, a rely for \(A\) may be expressed as
\[
  \mathsf{R}_A \;\triangleq\; \Always\bigl(\neg\mathsf{after}_A \Implies R_A^{\mathsf{step}}\bigr).
\]
Compatibility of rely/guarantee assumptions becomes logical entailment obligations between these formulas (e.g., \(\mathsf{G}_B\models \mathsf{R}_A\) for \(B\neq A\) in the two-thread case); the resulting parallel-composition step is summarized in \S\ref{sec:rg:parallel}.
We use this encoding in \S\ref{sec:peterson} to phrase the standard ``ownership'' assumptions of Peterson's algorithm (only thread \(i\) writes \code{flag[i]}, and each thread writes \code{victim} only to its own identifier).

\paragraph{Convention: rely/guarantee as model constraints.}
Rely/guarantee (and ownership) clauses can be used in two natural ways in an epistemic setting, and in the sequel we commit to one of them.
\emph{(i) Closed-program reading (adopted here).}
We fix an interface specification \(\Phi\), typically a conjunction of \(\Always\)-guarded step constraints such as guarantees and ownership clauses, and restrict attention to runs that satisfy \(\Phi\).
Under this reading, \(\Knows_A\) quantifies only over alternatives that also satisfy \(\Phi\), so relies may be used inside knowledge without extra \(\Knows\)-wrapping; any formula that holds at all points of the restricted model is automatically known by every thread.
\emph{(ii) Open-system reading (not used in our development).}
Here the clause is an assumption about the actual run, added to the antecedent of an entailment but not restricting the epistemic alternatives.
Knowledge claims that depend on a rely then require an explicit hypothesis \(\Knows_A\mathsf{R}_A\).
The two readings are logically consistent and interchangeable for proofs that never appeal to knowledge about the rely itself; we adopt~(i) throughout.

\subsection{From rely--guarantee obligations to global invariants}
A standard use of rely--guarantee is to prove that some global safety property \(I\) is an \emph{invariant}.
In our past-time setting, the inductive step ``\(I\) is preserved by each transition'' is captured by the step predicate
\[
  \mathsf{Pres}(I) \;\triangleq\; \Always\bigl(\Prev\;\True \Implies (\Prev\; I \Implies I)\bigr),
\]
which says: whenever the current state has a predecessor, if \(I\) held there then \(I\) holds now.

\begin{lemma}[Invariant from step preservation]
\label{lem:inv-from-pres}
Let \(\mathsf{Init} \triangleq \neg\Prev\;\True\); this atom holds only at time \(0\), so it marks the initial state syntactically.
If \(I\) holds initially and \(\mathsf{Pres}(I)\) holds globally, then \(I\) has always held:
\[
  \Sometime(\mathsf{Init}\wedge I) \ \wedge\ \mathsf{Pres}(I)\ \models\ \Always\; I.
\]
\end{lemma}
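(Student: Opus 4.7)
I would argue semantically. Fix an arbitrary model and a point $(\pi,i)$ that satisfies both conjuncts of the antecedent. Unfolding the definition of $\Always$, the goal $(\pi,i)\models\Always\;I$ is equivalent to: $(\pi,k)\models I$ for every $k$ with $0\le k\le i$. I will establish this by finite induction on $k$ along the prefix $s_0,\ldots,s_i$ of $\pi$.

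For the base case $k=0$, I use the first antecedent. By the semantics of $\Sometime$ (i.e.\ $\True\Since\phi$), there exists a witness $j$ with $0\le j\le i$ such that $(\pi,j)\models\mathsf{Init}\wedge I$. Since $\mathsf{Init}=\neg\Prev\;\True$, and by the semantics of $\Prev$ exactly one index in any run satisfies $\neg\Prev\;\True$, namely index $0$, the witness must satisfy $j=0$. Hence $(\pi,0)\models I$. For the inductive step, assume $(\pi,k)\models I$ for some $k$ with $0\le k<i$, and aim at $(\pi,k+1)\models I$. The second antecedent $\mathsf{Pres}(I)$ at $(\pi,i)$, unfolded through $\Always$, gives that for every $m\in\{0,\ldots,i\}$ the formula $\Prev\;\True\Implies(\Prev\; I\Implies I)$ holds at $(\pi,m)$. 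Instantiate this at $m=k+1\le i$: we have $(\pi,k+1)\models\Prev\;\True$ because $k+1>0$, and $(\pi,k+1)\models\Prev\; I$ by the induction hypothesis together with the semantics of $\Prev$. The implication then yields $(\pi,k+1)\models I$, closing the induction.

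\paragraph{Main obstacle.}
The argument is mostly bookkeeping in the semantics; the only delicate point is the base case. One must note that $\mathsf{Init}$ is witnessed at a \emph{unique} index of any run (namely $0$), so the existential witness supplied by $\Sometime$ in the first antecedent is forced to pin $I$ to the initial state rather than to some arbitrary past point; without this uniqueness the base case of the induction would not go through. A purely syntactic derivation would instead invoke the $\textsc{sinceE}$ rule on the negation of $\Always\; I$ (unfolded as $\True\Since\neg I$) to perform the same induction in the proof system, but the semantic presentation above is cleaner and uses only soundness of the rules already established.
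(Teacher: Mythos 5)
Your proof is correct and follows essentially the same argument as the paper's: anchor $I$ at the unique $\mathsf{Init}$ point (index $0$) via the $\Sometime$ witness, then propagate $I$ forward one step at a time using $\mathsf{Pres}(I)$. The only difference is presentational — you run the induction directly to conclude $I$ at every $k\le i$, whereas the paper phrases the same propagation as a refutation of $\True\ \Since\ \neg I$ packaged via \textsc{sinceE} — and your observation that the $\mathsf{Init}$ witness is forced to be index $0$ is exactly the point that makes the base case work.
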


\begin{proof}[Proof sketch]
Fix \((\pi,i)\) satisfying both premises; we show \((\pi,k)\models I\) for every \(k\le i\) by induction on \(k\).
For \(k=0\): \(\mathsf{Init}\) has unique witness \(0\), so \(\Sometime(\mathsf{Init}\wedge I)\) forces \((\pi,0)\models I\).
For the step, assume \((\pi,k)\models I\) with \(k<i\); then \((\pi,k+1)\models \Prev\,I\) and, since \(k+1>0\), also \((\pi,k+1)\models\Prev\,\True\), so the instance of \(\mathsf{Pres}(I)\) at \(k+1\) yields \((\pi,k+1)\models I\).
Taking \(k=i\) gives \((\pi,i)\models I\); as \(i\) was arbitrary, \(\Always\,I\) holds.

This is exactly the content of the \(\Since\)-induction rule \textsc{sinceInd} (\S\ref{sec:proofsystem}), instantiated with \(\phi=\True\), \(\psi=\mathsf{Init}\wedge I\), \(\chi=\Always\,I\), and \(\Gamma=\{\mathsf{Pres}(I)\}\) (an \(\Always\)-formula, so the side condition holds).
Its premises are discharged within the calculus: the base \(\Gamma\vdash(\mathsf{Init}\wedge I)\Implies\Always\,I\) holds because \(\mathsf{Init}\) forces the current time to be \(0\), where \(\Always\,I\) reduces to \(I\); and the step \(\Gamma\vdash(\True\wedge\Prev\,\Always\,I)\Implies\Always\,I\) follows since \(\Prev\,\Always\,I\) gives \(I\) at all earlier points while \(\mathsf{Pres}(I)\) extends it to the present.
The rule's conclusion is \(\Sometime(\mathsf{Init}\wedge I)\Implies\Always\,I\), i.e.\ the lemma.
(Note that the weaker unfolding rule \textsc{sinceE} does \emph{not} suffice here: its recursive premise would supply only \(\Prev\,\Sometime(\mathsf{Init}\wedge I)\), from which \(\mathsf{Pres}(I)\)---which needs \(\Prev\,I\)---cannot fire.)
\end{proof}

\paragraph{Compositional preservation checks.}
To establish \(\mathsf{Pres}(I)\) compositionally, we can split by which thread performed the last step.
Define
\[
  \mathsf{Pres}_A(I) \triangleq \Always\bigl(\mathsf{after}_A \Implies (\Prev\; I \Implies I)\bigr).
\]
By the interleaving semantics, at every time \(i>0\) exactly one \(\mathsf{after}_A\) holds, so \(\bigwedge_{A\in\Tid}\mathsf{Pres}_A(I)\) entails \(\mathsf{Pres}(I)\).
This is the direct analogue of the classic rely--guarantee proof obligation:
``
show that each thread preserves \(I\) on its own steps, under assumptions about how the environment may have affected the shared state.
''

\paragraph{A proof recipe.}
In the examples, we repeatedly use the following pattern.
\begin{enumerate}
  \item Choose a global invariant \(I\) expressing the safety goal.
  \item For each thread \(A\), prove a local step obligation \(\mathsf{Pres}_A(I)\) using sequential reasoning about \(A\)'s code, \emph{assuming} the rely constraints \(\mathsf{R}_A\).
  \item Discharge the rely assumptions by proving compatibility: the guarantees of the other threads imply \(\mathsf{R}_A\).
  \item Conclude \(\Always\; I\) by Lemma~\ref{lem:inv-from-pres}.
\end{enumerate}
The stability-lifting lemma (Lemma~\ref{lem:stability-lift}) is the workhorse for step (2) whenever the local proof needs to carry facts across unboundedly many environment steps, as in spin-wait loops.

\subsection{Parallel composition as a derived rule}
\label{sec:rg:parallel}
Classic presentations of rely--guarantee include an explicit \emph{parallel composition} rule.
In our setting, interleaving composition is built into the semantics, so the analogous step is a derived
meta-theorem stated directly over the trace predicates \(\mathsf{R}_A\) and \(\mathsf{G}_A\).

Write \(\mathsf{G}_{-A} \triangleq \bigwedge_{B\in\Tid\setminus\{A\}} \mathsf{G}_B\) for the environment guarantee
seen by thread \(A\).
For any invariant candidate \(I\), the following derived rule packages the usual rely--guarantee side conditions:
\[
\frac{
  \forall A\in\Tid.\ \ (\mathsf{G}_{-A} \models \mathsf{R}_A)
  \qquad
  \forall A\in\Tid.\ \ (\mathsf{G}_A \wedge \mathsf{R}_A) \models \mathsf{Pres}_A(I)
}{
  \bigl(\bigwedge_{A\in\Tid}\mathsf{G}_A\bigr) \models \mathsf{Pres}(I)
}
\]
The second premise is the model-relative local proof obligation: \(\mathsf{Pres}_A(I)\) is evaluated only at points where \(\mathsf{after}_A\) holds, so it combines sequential reasoning about \(A\)-steps (summarized by \(\mathsf{G}_A\)) with the rely assumptions.
In the two-thread case, this specializes to the familiar compatibility obligations \(\mathsf{G}_0\models\mathsf{R}_1\) and \(\mathsf{G}_1\models\mathsf{R}_0\), together with the local preservation checks \((\mathsf{G}_i \wedge \mathsf{R}_i)\models\mathsf{Pres}_i(I)\).
Combined with Lemma~\ref{lem:inv-from-pres}, this yields the classic rely--guarantee structure:
each thread is proved correct under a rely, and the parallel system is correct once each rely is justified by the
other threads' guarantees.

\section{Case Study: Peterson's Algorithm}
\label{sec:peterson}

We sketch how the temporal--epistemic and stability principles developed above can be used to structure a compositional proof of Peterson's mutual exclusion algorithm.

\subsection{Abstract control locations}
We assume each thread \(i\in\{0,1\}\) has control locations
\(\ell^\mathsf{flag}\), \(\ell^\mathsf{victim}\), \(\ell^\mathsf{waitF}\), \(\ell^\mathsf{waitV}\), \(\ell^\mathsf{cs}\), \(\ell^\mathsf{exit}\).
The two waiting locations encode the standard desugaring of the compound guard into read micro-steps (one shared-memory access per step): from \(\ell^\mathsf{waitF}\) a step reads \code{flag[j]} (where \(j=1-i\)) and enters \(\ell^\mathsf{cs}\) if it reads \(0\); otherwise it moves to \(\ell^\mathsf{waitV}\), reads \code{victim}, and enters \(\ell^\mathsf{cs}\) iff \(\code{victim}\neq i\) (else loops back).
We use the state predicate \(\At(i,\ell)\) to test the current location of thread \(i\), and we define
\[
  \mathsf{InCS}_i \;\triangleq\; \At(i,\ell^\mathsf{cs}).
\]

\subsection{Event predicates for key instructions}
To connect the logical proof to program steps, it is convenient to name a few derived \emph{event predicates}.
Because control locations are part of the state, we can recognize that a specific instruction has just executed by looking at the \emph{previous} location of the active thread.
For thread \(i\in\{0,1\}\), define:
\begin{align*}
  \mathsf{SetFlag}_i   &\;\triangleq\; \mathsf{after}_i \wedge \Prev\,\At(i,\ell^\mathsf{flag}), \\
  \mathsf{SetVictim}_i &\;\triangleq\; \mathsf{after}_i \wedge \Prev\,\At(i,\ell^\mathsf{victim}), \\
  \mathsf{EnterCS}_i   &\;\triangleq\; \mathsf{after}_i \wedge \At(i,\ell^\mathsf{cs}) \wedge \Prev\bigl(\At(i,\ell^\mathsf{waitF}) \vee \At(i,\ell^\mathsf{waitV})\bigr), \\
  \mathsf{ClearFlag}_i &\;\triangleq\; \mathsf{after}_i \wedge \Prev\,\At(i,\ell^\mathsf{exit}).
\end{align*}
These are all transition formulas in the sense of \S\ref{sec:logic}: each refers to the immediately preceding state via \(\Prev\).
Intuitively, \(\mathsf{SetFlag}_i\) marks the post-state of executing \code{flag[i]=1}, \(\mathsf{SetVictim}_i\) marks the post-state of executing \code{victim=i}, and \(\mathsf{EnterCS}_i\) marks the post-state of taking the loop-exit transition into the critical section.
We use these predicates both as \emph{markers} for last-occurrence reasoning (\S\ref{sec:logic}) and as convenient names for thread guarantees.
The operational semantics ensures that each step updates a single thread's location according to the control-flow graph.

\subsection{Ownership-style guarantees and derived stability}
We encode the standard ownership assumptions of Peterson's algorithm as step properties.

\paragraph{(G\(_\mathsf{flag}\)) Only thread \(i\) writes \code{flag[i]}.}
For \(i\in\{0,1\}\), define
\[
  \mathsf{OwnFlag}_i \;\triangleq\; \Always\bigl(\neg\mathsf{after}_i \Implies (\code{flag[i]} = \Prev\,\code{flag[i]})\bigr).
\]
This implies that facts about \(\code{flag[i]}\) are stable under the environment of thread \(i\).
In particular, we obtain \(\mathsf{Stable}_i(\code{flag[i]}=1)\) and \(\mathsf{Stable}_i(\code{flag[i]}=0)\) as instances of Definition~\ref{def:stability}.

\paragraph{(G\(_\mathsf{victim}\)) Threads write \code{victim} only to their own id.}
We express this at the level of control locations:
\[
  \mathsf{OwnVictim}_i \;\triangleq\;
  \Always\Bigl(\mathsf{after}_i \wedge \Prev\,\At(i,\ell^\mathsf{victim}) \Implies (\code{victim}=i)\Bigr).
\]
This says: if the most recent step was by \(i\) and that step executed the \code{victim=i} command, then the resulting state satisfies \(\code{victim}=i\).

\subsection{Rely/guarantee summary}
It is useful to make the rely/guarantee interfaces of Peterson explicit.
For each thread \(i\) (with \(j=1-i\)), we can view the following as a rely/guarantee pair over shared variables.

\paragraph{Guarantee of thread \(i\).}
Thread \(i\) never writes \code{flag[j]}, and only writes \code{victim} when executing \code{victim=i}:
\[
  \mathsf{G}_i^{\code{flag}} \triangleq \Always\bigl(\mathsf{after}_i \Implies (\code{flag[j]}=\Prev\,\code{flag[j]})\bigr),
\]
\[
  \mathsf{G}_i^{\code{victim}} \triangleq \Always\Bigl((\mathsf{after}_i \wedge \neg\mathsf{SetVictim}_i)\Implies (\code{victim}=\Prev\,\code{victim})\Bigr).
\]
(The deterministic postcondition \(\Always(\mathsf{SetVictim}_i\Implies \code{victim}=i)\) complements the second formula.)

\paragraph{Rely of thread \(i\).}
Dually, thread \(i\) relies on the environment (here, thread \(j\)) not writing \code{flag[i]}:
\[
  \mathsf{R}_i^{\code{flag}} \triangleq \Always\bigl(\neg\mathsf{after}_i \Implies (\code{flag[i]}=\Prev\,\code{flag[i]})\bigr),
\]
which coincides with the ownership-style assumption \(\mathsf{OwnFlag}_i\) in the two-thread setting.
For \code{victim}, \(i\) relies only on the weak fact that the environment changes it \emph{only} at \(\mathsf{SetVictim}_j\), captured by the frame condition introduced above.

\paragraph{Compatibility.}
The classic rely/guarantee compatibility check ``\(\mathsf{G}_j \models \mathsf{R}_i\)'' becomes a simple entailment between step predicates.
For instance, \(\mathsf{G}_j^{\code{flag}}\) entails \(\mathsf{R}_i^{\code{flag}}\): if \(j\) never writes \code{flag[i]} on its own steps and \(i\) and \(j\) are the only threads, then \code{flag[i]} is preserved on all non-\(i\) steps.
This is exactly the kind of modular interference argument that our stability operator \(\mathsf{Stable}_A(\cdot)\) is designed to encapsulate; a more detailed development would additionally record explicit frame conditions for the remaining shared variables.

\subsection{Local guarantees and what a thread can know}
Two kinds of facts are used in the standard Peterson argument:
(i) \emph{local} facts about what a thread just executed and what branch it took, and
(ii) \emph{shared-state} facts that must be shown stable under interference.
The temporal--epistemic view makes this split explicit.

\paragraph{Deterministic postconditions of atomic steps.}
From the sequential semantics of a single step we obtain the following guarantees:
\begin{align*}
  & \Always(\mathsf{SetFlag}_i \Implies \code{flag[i]}=1), \\
  & \Always(\mathsf{ClearFlag}_i \Implies \code{flag[i]}=0), \\
  & \Always(\mathsf{SetVictim}_i \Implies \code{victim}=i).
\end{align*}
These are not epistemic statements; they are plain step facts about the operational semantics.
In addition, we use the simple frame condition that \code{victim} changes only at these assignments:
\[
  \Always\Bigl(\neg(\mathsf{SetVictim}_0 \vee \mathsf{SetVictim}_1) \Implies (\code{victim} = \Prev\,\code{victim})\Bigr).
\]

\paragraph{Why knowing a shared fact requires a rely.}
Even if thread \(i\) just executed \code{flag[i]=1}, it does \emph{not} automatically follow that \(i\) knows \(\code{flag[i]}=1\) at the resulting state under our asynchronous perfect-recall semantics.
Intuitively, if the environment were allowed to change \code{flag[i]} without affecting \(i\)'s local state, then \(i\) cannot rule out that additional environment steps have already occurred since its assignment.
Thus \(\Knows_i(\code{flag[i]}=1)\) becomes valid only once we combine the local postcondition with an interference restriction such as \(\mathsf{OwnFlag}_i\).
Under the closed-program convention of \S\ref{sec:rg}, \(\mathsf{OwnFlag}_i\) is part of the model, so it already holds at every point quantified over by \(\Knows_i\); the knowledge statement then follows from the sequential postcondition of \(\mathsf{SetFlag}_i\) together with the fact that no environment step can falsify \code{flag[i]=1} in any admitted alternative run.

\paragraph{Knowing a past observation.}
In contrast, facts about \emph{what the thread observed at its own step} are robust: the past does not change.
Under the desugaring above, $\mathsf{EnterCS}_i$ holds exactly at the post-state of the \emph{read micro-step} that exits the spin loop. Such a step can enter $\ell^\mathsf{cs}$ only if it observed either $\code{flag[j]}=0$ or $\code{victim}\neq i$ in its pre-state, so the compound guard was false there.
Because the read result updates $i$'s local component (and thus its observation history), the same sequential justification yields an epistemic guarantee about that pre-state (using $\PreA{i}{\cdot}$ from \S\ref{sec:logic}):
\[
  \Always\Bigl(\mathsf{EnterCS}_i \Implies \Knows_i\,\PreA{i}{\neg(\code{flag[j]}{=}1 \wedge \code{victim}{=}i)}\Bigr).
\]
This is the form of knowledge we actually need: it records what \(i\) learned when it evaluated the guard, rather than asserting knowledge of the current shared-state values.

\subsection{Key epistemic/stability reasoning steps}
We highlight the two proof obligations where the temporal--epistemic view is most useful.

\paragraph{Carrying forward a local fact.}
When thread \(i\) executes \code{flag[i]=1}, the post-state satisfies \(\code{flag[i]}=1\) by the deterministic postcondition of the step.
Moreover, under the ownership constraint \(\mathsf{OwnFlag}_i\) the environment cannot change \code{flag[i]} without \(i\) acting; since \(\mathsf{OwnFlag}_i\) is part of the model, this fact is also something \(i\) can \emph{justify} at that step and keep using between its own steps (Lemma~\ref{lem:know-persist}), yielding \(\Knows_i(\code{flag[i]}=1)\) at the post-state of \(\mathsf{SetFlag}_i\).
Using Corollary~\ref{cor:epistemic-lift} we can transport this fact to later times:
\[
  \mathsf{OwnFlag}_i \wedge \PrevA{i}{\Knows_i(\code{flag[i]}=1)} \ \models\  \code{flag[i]}=1 .
\]
This is the simplest instance of the rely--guarantee pattern: the guarantee of the environment (it does not write \code{flag[i]}) provides the stability needed to keep a fact established at the last \(i\)-step valid while the environment executes.

\paragraph{The ``last entrant'' argument.}
We now make the textbook proof more explicit in the temporal--epistemic language.
Fix an arbitrary run \(\pi\) and suppose, for contradiction, that at some time \(t\) both threads are in the critical section:
\((\pi,t)\models \mathsf{InCS}_0 \wedge \mathsf{InCS}_1\).
Let \(i\in\{0,1\}\) be the thread that entered the critical section \emph{last}, and let \(j=1-i\).
Let \(e\le t\) be the entry point of \(i\), characterized by \((\pi,e)\models \mathsf{EnterCS}_i\) and \((\pi,e-1)\models \mathsf{InCS}_j\).

\paragraph{Step 1: \(j\) being in the critical section forces \(\code{flag[j]}=1\).}
By the control flow of thread \(j\), it executes \code{flag[j]=1} before reaching \(\ell^\mathsf{cs}\) and executes \code{flag[j]=0} only after leaving the critical section.
Under \(\mathsf{OwnFlag}_j\), only thread \(j\) writes \code{flag[j]}, so \(\code{flag[j]}=1\) is stable between \(j\)'s own flag steps; together these facts yield the global safety invariant
\[
  \Always(\mathsf{InCS}_j \Implies \code{flag[j]}=1).
\]
Applying this invariant at time \(e-1\), where \((\pi,e-1)\models \mathsf{InCS}_j\), gives \((\pi,e-1)\models \code{flag[j]}=1\), and hence at the entry step
\begin{equation}
\label{eq:pet-flagj-pre}
(\pi,e)\models \Prev(\code{flag[j]}=1).
\end{equation}

\paragraph{Step 2: entering the critical section records what \(i\) learned from the loop guard.}
By the epistemic guarantee for \(\mathsf{EnterCS}_i\) above, at time \(e\) thread \(i\) knows that in the \emph{pre-state} of its entry step the loop guard was false:
\[
  (\pi,e)\models \Knows_i\,\PreA{i}{\neg(\code{flag[j]}{=}1 \wedge \code{victim}{=}i)}.
\]
By factivity of knowledge (the T axiom, \S\ref{sec:proofsystem}) the \(\Knows_i\) modality may be discharged, giving \(\PreA{i}{\neg(\code{flag[j]}{=}1 \wedge \code{victim}{=}i)}\).
Since \(\mathsf{EnterCS}_i\) implies \(\mathsf{after}_i\), by the identity noted in \S\ref{sec:logic:laststep} the operator \(\PreA{i}{\cdot}\) coincides here with the ordinary previous-state operator, giving
\begin{equation}
\label{eq:pet-guard-pre}
(\pi,e)\models \Prev\bigl(\neg(\code{flag[j]}{=}1 \wedge \code{victim}{=}i)\bigr).
\end{equation}
Combining \eqref{eq:pet-flagj-pre} and \eqref{eq:pet-guard-pre} yields
\[
(\pi,e)\models \Prev(\code{victim}\neq i),
\]
and since there are only two threads this simplifies to
\begin{equation}
\label{eq:pet-victim-pre}
(\pi,e)\models \Prev(\code{victim}=j).
\end{equation}

\paragraph{Step 3: \eqref{eq:pet-victim-pre} forces \(j\)'s victim write to occur after \(i\)'s.}
Thread \(i\) executes \code{victim=i} immediately before entering the loop, i.e., \(\mathsf{SetVictim}_i\) occurs strictly before \(\mathsf{EnterCS}_i\).
By the deterministic postcondition of \code{victim=i}, the post-state of \(\mathsf{SetVictim}_i\) satisfies \(\code{victim}=i\), whereas \eqref{eq:pet-victim-pre} says that in the pre-state of the entry step \(\code{victim}=j\).
Hence \code{victim} must be written to \(j\) at some point strictly between \(\mathsf{SetVictim}_i\) and \(i\)'s entry.
By the frame condition above, the only steps that change \code{victim} are \(\mathsf{SetVictim}_0\) and \(\mathsf{SetVictim}_1\); and by the deterministic postconditions \(\mathsf{SetVictim}_0\Implies\code{victim}=0\) and \(\mathsf{SetVictim}_1\Implies\code{victim}=1\), the only one that can produce the value \(j\) is \(\mathsf{SetVictim}_j\) (the write \(\mathsf{SetVictim}_i\) yields \(\code{victim}=i\neq j\)).
Therefore some occurrence of \(\mathsf{SetVictim}_j\) lies strictly between \(\mathsf{SetVictim}_i\) and \(\mathsf{EnterCS}_i\), which we summarize using the happens-before abbreviation as
\[
  \mathsf{SetVictim}_i\ \HB\ \mathsf{SetVictim}_j .
\]

\paragraph{Step 4: the ordering contradicts that \(j\) was already in the critical section.}
Let \(t^\ast\le e\) be the time of the most recent \code{victim=j} write witnessed by Step 3, so that \(\mathsf{SetVictim}_i\) occurs strictly before \(t^\ast\); moreover \(t^\ast<e\), since \(\mathsf{SetVictim}_j\) is a \(j\)-step while \(\mathsf{EnterCS}_i\) is an \(i\)-step.
By the control flow of thread \(i\), executing \code{victim=i} is preceded by \code{flag[i]=1} and followed by the spin loop, with \code{flag[i]=0} executed only after leaving the critical section.
Hence at \(t^\ast\) thread \(i\) has executed \(\mathsf{SetFlag}_i\) and not yet any \(\mathsf{ClearFlag}_i\), and its most recent step at or before \(t^\ast\) lies in this window.
Every \(i\)-step in that window is either \(\mathsf{SetFlag}_i\) (deterministic postcondition \(\code{flag[i]}=1\)) or a later prelude/spin step that does not write \code{flag[i]}; combined with \(\mathsf{OwnFlag}_i\) (no environment step writes \code{flag[i]}), a short induction shows every such post-state satisfies \(\code{flag[i]}=1\).
Thus \(\PrevA{i}{(\code{flag[i]}=1)}\) holds at \(t^\ast\), and with \(\mathsf{Stable}_i(\code{flag[i]}=1)\) (derived from \(\mathsf{OwnFlag}_i\)) and Lemma~\ref{lem:stability-lift} we obtain \(\code{flag[i]}=1\) at \(t^\ast\).
Immediately after \(\mathsf{SetVictim}_j\) we also have \(\code{victim}=j\) by its deterministic postcondition, so thread \(j\)'s loop guard \(\code{flag[i]}=1 \wedge \code{victim}=j\) evaluates to true at \(t^\ast\).
Neither \code{flag[i]} nor \code{victim} changes between \(t^\ast\) and \(e\): by \(\mathsf{OwnFlag}_i\) only \(i\) writes \code{flag[i]}, and \(i\) remains in its spin loop through \(e\); and since \(t^\ast\) is the most recent \code{victim=j} write no later than \(e\), no later \(\mathsf{SetVictim}_j\) occurs before \(e\), while no \(\mathsf{SetVictim}_i\) can occur in \((t^\ast,e]\) (the spin loop contains no assignment to \code{victim}).
Consequently \(j\)'s guard remains true throughout \((t^\ast,e]\), so \(j\) cannot exit the spin loop in that interval; in particular \(\mathsf{InCS}_j\) fails at time \(e-1\), contradicting our assumption that \(j\) was already in the critical section then.

\paragraph{Summary.}
The formal structure mirrors the informal proof, but the logic forces a clean separation:
\emph{knowledge} is used only to justify the pre-state guard fact \eqref{eq:pet-guard-pre}, whereas \emph{stability} (rely/guarantee) transports \code{flag}-facts across unbounded environment interference. Moreover, once mutual exclusion is established as a global invariant, the advertised epistemic safety property follows as a corollary: since $\mathsf{InCS}_i$ is determined by $i$'s local control state, from $\Always\neg(\mathsf{InCS}_0\wedge\mathsf{InCS}_1)$ we obtain $\Always(\mathsf{InCS}_i\Implies\Knows_i\neg\mathsf{InCS}_j)$.

\paragraph{Discussion.}
The proof above is the familiar argument for Peterson's algorithm, but the temporal--epistemic presentation isolates two reusable reasoning patterns:
(i) carry forward a fact established at the last step using stability, and
(ii) separate what a thread can deduce at its own loop-exit point (an epistemic fact) from what remains invariant under environmental steps (a stability fact).
This separation is precisely what rely--guarantee proofs enforce by design.

A fully formal proof would add explicit predicates for ``thread \(i\) just executed line \(\ell\)'' and would discharge the local control-flow facts using an ordinary Hoare-style reasoning for the sequential fragment of each thread.
Our goal here is to show that the interference reasoning can be expressed and organized cleanly in the temporal--epistemic logic.

\section{Related Work}
\label{sec:related}

\paragraph{Compositional concurrency reasoning.}
Owicki--Gries introduced non-interference reasoning for proving assertions about parallel programs by checking that each thread preserves the assertions of the others~\cite{OwickiGries76}.
Jones' rely--guarantee method made environmental assumptions explicit and remains a foundational technique for modular interference reasoning~\cite{Jones83}.
These ideas have been refined and mechanized in modern frameworks, including concurrent separation logic~\cite{Brookes07,OHearn07,JungSSSTBD15} and rely--guarantee/separation-logic hybrids~\cite{VafeiadisParkinson07,Feng08LRG,DoddsFPV09,Dinsdale-YoungDGPV10,PintoDG14,Dinsdale-YoungBGPY13}.
Our contribution is orthogonal to these program logics: we use past-time epistemic assertions to make thread-local inference explicit, and we package interference control as stability obligations (\S\ref{sec:rg}), mirroring the classical ``stable under rely'' check.

\paragraph{Epistemic logic and verification.}
We adopt the interpreted-systems semantics of knowledge with asynchronous perfect recall~\cite{Fagin95Book,HalpernMoses90}.
Epistemic perspectives on concurrent computations and proof systems for parallel processes were explored early on~\cite{HoekMeyer92,HoekMeyer94,vanHulstMeyer96,Knight13}.
Temporal--epistemic logics have a substantial automated-verification literature, with tool support in MCK and MCMAS~\cite{GammieMeyden04,LomuscioQuRaimondi09}; see also the surveys~\cite{LomuscioPenczek12chapter}.
Knowledge-based specifications are also common in information-flow security, including concurrent settings~\cite{BalliuDG11}.

\paragraph{Knowledge-based correctness characterizations.}
Knowledge has been used to characterize consistency conditions such as sequential consistency~\cite{Lamport79} and linearizability~\cite{HerlihyWing90}; see, e.g.,~\cite{GleissenthallRybalchenko13,Hirai10}.
Those works use knowledge to specify properties of executions as observed by clients or groups of observers, whereas we use (single-thread) knowledge as a proof principle inside compositional safety arguments.

\section{Conclusion}
\label{sec:conclusion}

We developed a past-time temporal epistemic logic for reasoning about shared-variable concurrent programs under an interleaving semantics with perfect recall.
The central technical point is that epistemic statements about what a thread knew at its last step can be lifted to current-state facts whenever those facts are stable under environmental interference.
This yields a clean rely--guarantee style reasoning principle formulated directly in the logic and illustrated on Peterson's algorithm.

\paragraph{Current scope and limitations.}
Our semantic instantiation uses sequentially-consistent interleaving with single-access micro-steps.
This is a deliberate baseline: it isolates the interaction between observation-based knowledge and interference control, but it does not yet address read--modify--write atomics (such as compare-and-swap), fences, or weak-memory reorderings.
Likewise, we interpret formulas over infinite runs; terminating threads can be accommodated by standard stuttering conventions (e.g., self-loops at a final control location), but we have not developed dedicated proof rules for termination or liveness.

\paragraph{Future work.}
Several directions suggest themselves.
On the mechanization side, extending the existing Isabelle/HOL development to cover the rely--guarantee meta-theorem of \S\ref{sec:rg:parallel} would provide a stronger foundation for case studies.
On the automation side, integrating the normal-form transformation of Appendix~\ref{app:normalform} with SMT- or BDD-based backends is a natural next step.
On the modelling side, extending the setting to richer forms of concurrency, notably weak memory models and fine-grained atomicity, seems a particularly promising direction, since the epistemic perspective is designed precisely around what a thread can soundly infer from partial observations.

\subsubsection*{\ackname}
We thank Robert K{\"u}nnemann and Yufei Liu for their contributions to the mechanization of the basic logic. This work has been supported by the Wallenberg AI, Autonomous Systems and Software Program (WASP) funded by
the Knut and Alice Wallenberg Foundation.

\bibliographystyle{splncs04}
\bibliography{ref}

\appendix
\section{Appendix: A Normal Form for Past-Time Formulas and a Remark on Stability}
\label{app:normalform}

This appendix collects two auxiliary observations referred to in the main text.
First (\S\ref{app:nf:core}--\ref{app:nf:epistemic}), we sketch a normal-form transformation for the pure past-time fragment (without knowledge); the goal is to support future automation efforts, and it is not required for the soundness results in the main text.
Second (\S\ref{app:nf:stability}), we record a counterexample showing that stability is not monotone with respect to logical consequence, as referenced in \S\ref{sec:rg}.

\subsection{Eliminating derived operators}
\label{app:nf:core}
Because \(\Always\phi\) and \(\Sometime\phi\) are defined from \(\Since\), they can be eliminated directly:
\[
\Always\,\phi \equiv \neg(\True\ \Since\ \neg\phi),
\qquad
\Sometime\,\phi \equiv \True\ \Since\ \phi.
\]

\subsection{Fixpoint unfolding for \texorpdfstring{\(\Since\)}{Since}}
\label{app:nf:since}
The operator \(\Since\) satisfies the standard unfolding equivalence
\[
\phi\ \Since\ \psi \;\equiv\; \psi\ \vee\ (\phi \wedge \Prev(\phi\ \Since\ \psi)).
\]
This equivalence is sound under our semantics because at time \(0\) the \(\Prev\) subformula is false, so the base case reduces to \(\phi\ \Since\ \psi \equiv \psi\) at the initial state, as expected.

\subsection{Negation normal form}
\label{app:nf:nnf}
For the past-time fragment, a standard negation-normal-form (NNF) transformation pushes negations to atomic predicates by using
\[
\neg\Prev\phi \equiv (\Prev\True \wedge \Prev\neg\phi)\ \vee\ \neg\Prev\True,
\]
and by rewriting \(\neg(\phi\ \Since\ \psi)\) via the unfolding above and De Morgan's laws.
In practice, one often avoids expanding \(\neg(\phi\ \Since\ \psi)\) eagerly, introducing instead auxiliary predicates and using induction/recursion (e.g., in a model checker).

\subsection{Comment on epistemic operators}
\label{app:nf:epistemic}
A full normal form for the temporal--epistemic logic would need to account for the semantics of \(\Knows_A\) (universal quantification over \(\sim_A\)-classes), which is orthogonal to the temporal unfolding.
One possible direction is to restrict to syntactic fragments where epistemic operators occur only at ``local time'' points (e.g., immediately after a thread step) and to combine the temporal unfolding with standard S5 reasoning.
We leave this to future work.

\subsection{Counterexample: stability is not closed under consequence}
\label{app:nf:stability}
The closure properties in Lemma~\ref{lem:stable-closure} hold for Boolean connectives, but stability is \emph{not} monotone w.r.t.\ logical consequence.
Fix a thread \(A\) and a shared variable \(x\) ranging over integers, and let \(\phi \triangleq (x=0)\) and \(\psi \triangleq (x=0 \vee x=1)\), so that \(\models (\phi \Implies \psi)\).
Consider a run in which thread \(A\) never acts and the environment performs a step that changes \(x\) from \(1\) to \(2\).
At that environment step, \(\Prev\,\psi\) holds but \(\psi\) does not, so \(\mathsf{Stable}_A(\psi)\) fails.
On the other hand, \(\mathsf{Stable}_A(\phi)\) holds vacuously on the same run if the antecedent \(\Prev\,\phi\) is never true on environment steps (e.g., \(x\) is never \(0\) at those points).
Hence \(\mathsf{Stable}_A(\phi)\) and \(\models(\phi\Implies\psi)\) do not imply \(\mathsf{Stable}_A(\psi)\).

\end{document}